\newtheorem{thm}{Theorem}
\newenvironment{theorem}{\bf\begin{thm}\rm\em}{\end{thm}} 
\newtheorem{cor}[thm]{Corollary}
\newenvironment{corollary}{\bf\begin{cor}\rm\em}{\end{cor}} 
\newtheorem{lem}[thm]{Lemma}
\newtheorem{prop}[thm]{Proposition}
\newenvironment{proposition}{\bf\begin{prop}\rm\em}{\end{prop}} 
\newtheorem{rem}[thm]{Remark}
\newenvironment{remark}{\bf\begin{rem}\rm}{\end{rem}} 
\renewcommand{\Pr}{\mathbf{P}}
\newcommand\EXCLUDE[1]{}
\begin{document}

\title{How user throughput depends on the traffic demand in large cellular networks}
\author{\IEEEauthorblockN{Bart{\l }omiej~B{\l}aszczyszyn\IEEEauthorrefmark{1},
Miodrag Jovanovic\IEEEauthorrefmark{2}\IEEEauthorrefmark{1} and Mohamed Kadhem Karray\IEEEauthorrefmark{2} 
}
}

\maketitle
\let\thefootnote\relax\footnote{\IEEEauthorrefmark{1}INRIA-ENS,
23 Avenue d'Italie, 75214  Paris, France
Email: Bartek.Blaszczyszyn@ens.fr\\
\indent\IEEEauthorrefmark{2}Orange Labs;
38/40 rue G\'{e}n\'{e}ral Leclerc, 92794  
Issy-les-Moulineaux, France
Email: \{miodrag.jovanovic,\,mohamed.karray\}@orange.com}

\newcommand{\thefootnote}{\arabic{footnote}}
\addtocounter{footnote}{-1}

\thispagestyle{empty}

\begin{abstract}
We assume a space-time Poisson process of call arrivals on the infinite plane,
independently marked by data volumes and served by a cellular network modeled
by an infinite ergodic point process of base stations. Each point of this
point process represents the location of a base station that applies a
processor sharing policy to serve users arriving in its vicinity, modeled by
the Voronoi cell, possibly perturbed by some random signal propagation
effects. User service rates depend on their signal-to-interference-and-noise
ratios with respect to the serving station.

Little's law allows to express the mean user throughput in any region of this
network model as the ratio of the mean traffic demand to the steady-state mean
number of users in this region.
Using ergodic arguments and the Palm theoretic formalism, we define a global
mean user throughput in the cellular network and prove that it is equal to the
ratio of mean traffic demand to the mean number of users in the steady state
of the \textquotedblleft typical cell\textquotedblright\ of the network. Here,
both means account for double averaging: over time and network geometry, and
can be related to the per-surface traffic demand, base-station density and the
spatial distribution of the signal-to-interference-and-noise ratio. This
latter accounts for network irregularities, shadowing and cell dependence via
some cell-load equations.

Inspired by the analysis of the typical cell, we propose also a simpler,
approximate, but fully analytic approach, called the mean cell approach. The
key quantity explicitly calculated in this approach is the cell load. In
analogy to the load factor of the (classical) M/G/1 processor sharing queue,
it characterizes the stability condition, mean number of users and the mean
user throughput. We validate our approach comparing analytical and simulation
results for Poisson network model to real-network measurements.
\end{abstract}


\begin{keywords}
\footnotesize
user-througput, traffic demand, cell-load, Little's law, cellular network,
typical cell, point process, ergodicity, Palm theory, measurements
\end{keywords}

\section{Introduction}

Mean user throughput is a key quality-of-service metric in cellular data
networks. It describes the average \textquotedblleft speed\textquotedblright%
\ of data transfer during a typical data connection. It is usually defined as
the ratio of the average number of bits sent (or received) per data request to
the average duration of the data transfer. Since coexisting connections in a
given network cell share some given cell transmission capacity, mean user
throughput depends inherently on the requested data traffic. It also depends
on the network architecture (positioning of the base stations) and in fact may
significantly vary across different network cells. Moreover, extra-cell
interference makes performance of different cells interdependent. Predicting
the mean user throughput in function of the mean traffic demand locally (for
each cell) and globally in the network (which involves appropriate spatial
averaging in conjunction with the temporal one, already present in the
classical definition of the throughput) is a key engineering task in cellular
communications. In this paper we propose an analytic approach to the
evaluation of the mean user throughput in large irregular cellular networks,
validated by real-network measurements performed in operational networks.


Little's law allows to calculate the mean user throughput as the ratio of the
mean traffic demand (number of bits requested per unit of time) to the mean
number of users in the steady state of the network. This argument can be used
to express mean user throughput locally in any region of the network.
Statistics usually collected in operational networks allow to estimate the
mean traffic demand and the mean number of users for each cell and hour of the
day. Even if they carry important information about the local network
performance, they exhibit important variability over time (24 hours) and
network cells; cf. Figure~\ref{f.crude-data}. This can be explained by the
fact that mean user throughput in a particular cell does not depend only on
the traffic in this cell, but also on the neighbouring cells. Moreover, the
geometry of different cells in a real network may significantly differ. For
these two reasons, the family of local (established for each cell)
throughput-versus-traffic laws usually exhibits a lot of variability both in
real data and in network simulations, and hence does not explain well the
macroscopic (network- level) relation between the mean traffic demand and mean
user throughput. Finding such a macroscopic relation is an important task for
network dimensioning.
It is clear that an appropriate spatial averaging
is necessary to discover such a macroscopic law.

Spatial averages of point patterns, modeling in our case the geographic
locations of base stations, can be studied using the formalism of Palm
distributions naturally related to the ergodic results for point processes.
Within this setting one considers a typical base station with its typical cell
(zone of service) whose probabilistic characteristics correspond to the
aforementioned spatial averages of the characteristics for all base stations
in the network. Adopting this formalism, we define the \emph{mean user
throughput in the infinite ergodic network} as the limit of the ratio of the
mean volume of the data request to the mean service duration in a large,
increasing to the whole plane, network window. As the main result, we prove
that such defined (macroscopic) throughput characteristic is equal to the
ratio of the mean traffic demand to the mean number of users in the typical
cell of the network. Both these means account for double averaging: over time
and network geometry.

A key element of the analysis of the cellular network is the spatial
distribution of the signal-to-interference-and-noise ratio (SINR). We show how
this distribution enters into the macroscopic characterization of the
throughput. When considering SINR we are able to account for the fact that the
base stations that are idling, i.e., have no users to serve, do not contribute
to the interference. This makes the performance of different cells
interdependent and we take it into account via a system of cell-load equations.

Finally, we show how to amend the model letting it account for the shadowing
in the path loss. The latter is known to impact the geometry of the network,
in the sense that the serving base station is not necessarily the closest one.
It also alters the distribution of the SINR.

\subsection{Related work}

The evaluation of user QoS metrics in cellular networks is a hard problem, but
crucial for network operators and equipment manufacturers. It motivates a lot
of engineering and research studies. The complexity of this problem made many
actors develop complex and time consuming simulation tools such as those
developed by the industrial contributors to 3GPP (\emph{3rd Generation
Partnership Project})~\cite{3GPP36814-900}. There are many other simulation
tools such as TelematicsLab LTE-Sim~\cite{Piro2011}, University of Vien LTE
simulator~\cite{Mehlfuhrer2011,Simko2012} and LENA
tool~\cite{Baldo2011,Baldo2012} of CTTC, which are not necessarily compliant
with 3GPP.

A possible analytical approach to this problem is based on the information
theoretic
characterization of the individual link performance; cf
e.g.~\cite{GoldsmithChua1997,Mogensen2007}, in conjunction with a queueing
theoretic modeling and analysis of the user traffic;
cf.
e.g.~\cite{Borst2003,BonaldProutiere2003,HegdeAltman2003,BonaldBorstHegdeJP2009,RongElayoubiHaddada2011,KarrayJovanovic2013Load}%
. All these works consider some particular aspects of the network and none of
them considers a large, irregular multi-cell network. Such a scenario is
studied in our approach by using stochastic-geometric tools combined with the
two aforementioned theories. As a result, we propose a global, macroscopic
approach to the evaluation of the user QoS metrics in cellular networks, which
we compare and validate with respect to real network measurements.

Stochastic geometry has already been shown to give analytically tractable
models of cellular networks, see e.g.~\cite{ANDREWS2011}. However, to the best
of our knowledge, the prior works in this context usually do not consider any
dynamic user traffic.

\subsection{Paper organization}

We describe our general cellular network model, comprising the geometry of
base stations, path loss model, space-time traffic demand process, and the
service policy in Section~\ref{s.ModelDescription}. In
Section~\ref{s.TypicalMeanCells} we develop two approaches, called
respectively \emph{typical} and \emph{mean cell } approach, allowing to study
the dependence of the mean user throughput in the network and other
macroscopic network characteristics on the traffic demand and other model
parameters. In Section~\ref{s.NumericalResults} we apply these approaches
studying some particular network model and compare the obtained numerical
results to real field measurements.

\section{Model description and local\newline characteristics}

\label{s.ModelDescription} We shall now describe our model.

\subsection{Network model}

\label{ss.nm} We consider locations $\{X_{1},X_{2},\ldots\}$ of base stations
(BS) on the plane $\mathbb{R}^{2}$ as a realisation of a point process, which
we denote by $\Phi$.~\footnote{According to the formalism of the theory of
point processes (cf e.g.~\cite{DaleyVereJones2003}), a point process is a
random measure $\Phi=\sum_{j}\delta_{X_{j}}$, where $\delta_{x}$\ denotes the
Dirac measure at $x$.} We assume that $\Phi$\ is stationary and ergodic with
positive, finite intensity (mean number of BS per unit of surface)
$\lambda_{\mathrm{BS}}$.~\footnote{Stationarity means that the distribution of
the process is translation invariant, while ergodicity allows to interpret
some mathematical expectations as spatial averages of some network
characteristics.}


In order to simplify the presentation, we shall make first the following two
assumptions, which will be relaxed in Sections~\ref{ss.shadow}
and~\ref{s.PonderedInterference}, respectively.

\begin{enumerate}
\item \label{no-shadow} \emph{There is no shadowing}. The (time-averaged over
fading) propagation loss depends only on the distance $r$ between the emitter
and the receiver through a path-loss function $l(r)$, which we assume increasing.

\item \label{fi} \emph{Full interference.} Each base station is always
transmitting at some fixed power $P$, common for all stations.
\end{enumerate}

We will also assume throughout the whole paper that \emph{each user is served
by the BS which he or she receives with the strongest signal power}. The
consequence of the assumption~\ref{no-shadow} above is that each BS $X\in\Phi
$\ serves users in a geographic zone
$V\left(  X\right)  =\{y\in\mathbb{R}^{2}:|y-X|\leq\min_{Z\in\Phi}|y-Z|\}$
which is called \emph{Voronoi cell} of~$X$ in~$\Phi$.
Both the above assumptions~\ref{no-shadow} and~\ref{fi} allow to represent the
\emph{signal to interference and noise ratio} (SINR) at location $y\in V(X)$,
$X\in\Phi$ as
\begin{equation}
\mathrm{SINR}\left(  y,\Phi\right)  :=\frac{P/l\left(  \left\vert
y-X\right\vert \right)  }{N+P\sum_{Z\in\Phi\backslash\left\{  X\right\}
}1/l\left(  \left\vert y-Z\right\vert \right)  }\,, \label{e.SINR}%
\end{equation}
where $N$ is the noise power. Note that this represents the SINR in the
\emph{down-link} (BS to user) channel. Note also, that we assume that each
interfering base station always transmits (even if, for example, it has no
user to serve). This model, which we call \emph{full interference model}, will
be improved in Section~\ref{s.PonderedInterference}.

We assume that the \emph{peak bit-rate} at location $y$, defined as the
bit-rate of a user located at $y$ when served alone by its BS, is some
function $R\left(  \mathrm{SINR}\right)  $, of the
SINR.~\footnote{\label{fn.peak-rate} The theoretical upper-bound of such
function characterizing the link level performance is typically given by
information theory. For example, in the case of AWGN channel, $R\left(
\mathrm{SINR}\right)  =W\log_{2}\left(  1+\mathrm{SINR}\right)  $\ where
$W$\ is the bandwidth. We shall assume that the \emph{ fading} is already
averaged out by considering the so-called ergodic capacity. For example, in
the case of flat fading, $R\left(  \mathrm{SINR}\right)  =WE\left[  \log
_{2}\left(  1+\left\vert H\right\vert ^{2}\mathrm{SINR}\right)  \right]
$\ where $H$\ is a random variable representing the fading and $E\left[
\cdot\right]  $\ is the expectation with respect to~$H$.}


We consider variable bit-rate (VBR) traffic; i.e., users arrive to the network
and require to transmit some volume of data at a bit-rate decided by the
network.
Specifically, we assume that user channels are \emph{intra-cell orthogonal and
inter-cell independent}: if BS $X$ serves $n$ users located at $y_{1}%
,y_{2},\ldots,y_{n}\in V\left(  X\right)  $\ then the bit-rate of the user
located at $y_{j}$\ equals to $1/n\,$th of its peak bit-rate $\frac{1}%
{n}R\left(  \mathrm{SINR}\left(  y_{j},\Phi\right)  \right)  $, $j\in\left\{
1,2,\ldots,n\right\}  $.~\footnote{This can be achieved using various multiple
access schemes, e.g. time division.}

The pattern of BS $\Phi$ does not evolve in time. We describe now the
space-time process of user arrivals (and departures).

\subsection{Traffic demand}

Users arrive uniformly on the plane and require to transmit a random
(arbitrarily distributed) volume of data of mean $1/\mu$ bits. The duration
between the arrivals of two successive users in each geographic zone $S$ of
surface $|S|$ is an exponential random variable of parameter $\lambda
\times|S|$ . This means that on average there are $\lambda$ arrivals per
surface unit. The arrival locations, inter-arrival durations as well as the
data volumes are assumed independent across users. The above description
corresponds to a \emph{space-time Poisson process of arrivals}, independently
marked by data volumes.

We assume that the users don't move considerably during their
calls.~\footnote{Small user movements are reflected in channel fading; cf. the
remark in footnote~\ref{fn.peak-rate}.} Each user stays in the system for the
time necessary to download his data. This takes a random (service) time
because the bit-rate with which he is served depends on the configuration of
other users \emph{served by the same base station}. Users depart from the
system immediately after having downloaded their data.

The traffic demand per surface unit is then equal to
$\rho=\frac{\lambda}{\mu}$,
which will be expressed in bit/s/km$^{2}$.
The traffic demand in a given cell equals
\begin{equation}
\rho\left(  X\right)  =\rho\left\vert V\left(  X\right)  \right\vert ,\quad
X\in\Phi. \label{e.TrafficDemand}%
\end{equation}

\subsection{Local quality of service characteristics}

\label{s.QoS} For a fixed configuration of BS $\Phi$, the service of users
arriving to the cell $V(X)$ of a given BS $X\in\Phi$ can be modeled by an
appropriate (spatial) multi-class processor sharing queue, with classes
corresponding to different peak bit-rates characterized by user locations
$y\in V(X)$. Note also that a consequence of our model assumptions (in
particular the full interference assumption~\ref{fi}, inter-cell channel
independence and space-time Poisson arrivals) the service processes of
different queues are independent.

We consider now the steady-state of users served in each cell $V(X)$%
.~\footnote{Note that the (mean) QoS characteristics of users in this state
correspond to time-averages of user characteristics.} The following
expressions follow from the queueing-theoretic analysis of the processor
sharing systems of each BS $X\in\Phi$, cf~\cite{BonaldProutiere2003,KarrayJovanovic2013Load} for the details.

\begin{itemize}
\item The service process of BS $X\in\Phi$ is stable if and only if its
traffic demand does not exceed the critical value that is the harmonic mean of
the peak bit-rate over the cell:
\begin{equation}
\rho_{\mathrm{c}}\left(  X\right)  :=\frac{\left\vert V\left(  X\right)
\right\vert }{\int_{V\left(  X\right)  }1/R\left(  \mathrm{SINR}\left(
y,\Phi\right)  \right)  dy}\,. \label{e.CriticalTraffic}%
\end{equation}
Note that $\rho_{c}(X)$ depends on $X$ and on $\Phi$. The same observation is
valid for the subsequent cell characteristics.

\item The mean user throughput in the given cell, defined as the ratio of the
mean volume of the data request $1/\mu$ to the average service time of users
\emph{in this cell}, can be expressed as follows
\begin{equation}
r\left(  X\right)  =\max(\rho_{\mathrm{c}}\left(  X\right)  -\rho\left(
X\right)  ,0)\,. \label{e.UserThroughput}%
\end{equation}

\item The mean number of users in the steady state of the given cell equals
to
\begin{equation}
N\left(  X\right)  =\frac{\rho\left(  X\right)  }{r\left(  X\right)  }\,.
\label{e.UsersNumber}%
\end{equation}
Note that $N(X)=\infty$ if $\rho(X)\ge\rho_{c}(X)$.

\item The probability that the given BS is not idling in the steady state (has
at least one user to serve) equals
\begin{equation}
p\left(  X\right)  =\min\left(  \theta\left(  X\right)  ,1\right)  \,,
\label{e.Proba}%
\end{equation}
where $\theta(X)$, which we call \emph{cell load}, is defined as
\begin{equation}
\theta\left(  X\right)  :=\frac{\rho\left(  X\right)  }{\rho_{\mathrm{c}%
}\left(  X\right)  }\,. \label{e.Load1}%
\end{equation}

\end{itemize}

Note that the cell is stable if and only if $\theta(X)<1$ and
\begin{equation}
\label{e.theta-R}\theta(X)=\rho\int_{V\left(  X\right)  }1/R\left(
\mathrm{SINR}\left(  y,\Phi\right)  \right)  dy\,.
\end{equation}
Moreover,
\begin{align}
N(X)  &  =\frac{\theta(X)}{1-\theta(X)}\,,\label{e.N-theta}\\
r(X)  &  =\rho(X)(1/\theta(X)-1) \label{e.r-theta}%
\end{align}
provided $\theta(X)<1$.

The above expressions allow to express all other characteristics in terms of
the traffic demand per cell $\rho(X)$ and the cell load $\theta(X)$.

\begin{remark}
All the above characteristics are \emph{local} network characteristics in the
sense that they characterize the service at each BS $X$ and vary over
$X\in\Phi$. Real data analysis and simulations for Poisson network models
exhibit a lot of variability among these characteristics. In particular,
plotting the mean user throughput $r(X)$ as function of the mean traffic
demand $\rho(X)$ for different $X\in\Phi$ does not reveal any apparent
systematic relation between these two local characteristics; cf.
Figure~\ref{f.crude-data}.
\end{remark}

\section{Global network characteristics}

\label{s.TypicalMeanCells} In this section we propose some global
characteristics of the network allowing to characterize its macroscopic
performance. We are particularly interested in finding such a relation between
the (per surface) traffic demand $\rho$ and the (global) mean user throughput
in the network, with this latter characteristic yet to be properly defined.

\subsection{Typical cell of the network}

\label{ss.typical-cell} A first, natural idea in this regard is to consider
spatial averages of the local characteristics in an increasing network window
$A$, say a ball centered at the origin and the radius increasing to infinity.
Assuming ergodicity of the point process $\Phi$ of the BS, these averages can
be expressed and calculated as Palm-expectations of the respective
characteristics of the so called \textquotedblleft typical
cell\textquotedblright\ $V(0)$. For example
\begin{equation}
\lim_{|A|\rightarrow\infty}1/\Phi(A)\sum_{X\in A}\rho(X)=\mathbf{E}^{0}%
[\rho(0)]=\rho\mathbf{E}^{0}[|V(0)|]\,. \label{e.rho-ergodicity}%
\end{equation}
The typical cell $V(0)$ is the cell of the BS located at the origin $X=0$ and
being part of the network $\Phi$ distributed according to the \emph{Palm
distribution} $\Pr^{0}$ associated to the original stationary distribution
$\Pr$ of $\Phi$. In the case of Poisson process, the relation between the Palm
and stationary distribution is particularly simple and (according to
Slivnyak's theorem) consists just in adding the point $X=0$ to the stationary
pattern $\Phi$.

The convergence analogue to~(\ref{e.rho-ergodicity}) holds for each of the
previously considered local characteristics $\mathbf{E}^{0}[\rho_{c}(0)]$,
$\mathbf{E}^{0}[r(0)]$, $\mathbf{E}^{0}[N(0)]$, $\mathbf{E}^{0}[p(0)]$ and
$\mathbf{E}^{0}[\theta(0)]$. The convergence is $\Pr$~almost sure and follows
from the ergodic theorem for point processes (see~\cite[Theorem~4.2.1]%
{BaccelliBlaszczyszyn2009T1},~\cite[Theorem~13.4.III]{DaleyVereJones2003}).
However, as we will explain in what follows, \emph{not all} of these
\emph{mean-typical cell} characteristics have natural interpretations as
macroscopic network characteristics.

First, note that the existence of some (even arbitrarily small) fraction of BS
$X$ which are not stable (with $\rho(X)\ge\rho_{c}(X)$, hence $N(X)=\infty$)
makes $\mathbf{E}^{0}[N(0)]=\infty$.

\begin{remark}
\label{r.nonstable-cells} For a well dimensioned network one does not expect
unstable cells. For a perfect hexagonal network model $\Phi$ \emph{all} cells
are stable or unstable depending on the value of the per-surface traffic
demand $\rho$. An artifact of an infinite, homogeneous, Poisson model $\Phi$
is that for arbitrarily small $\rho$ there exists a non-zero fraction of BS
$X\in\Phi$, which are non-stable. This fraction is very small for reasonable
$\rho$, allowing to use Poisson to study QoS metrics which, unlike
$\mathbf{E}^{0}[N(0)]$, are not ``sensitive'' to this artifact.
\end{remark}

We will also show in the next section that it is \emph{not} natural to
interpret $\mathbf{E}^{0}[r(0)]$ (which is \emph{not} sensitive to the
existence of a small fraction of unstable cells) as the mean user throughput
in the network; see Remark~\ref{r.network-throuhput}. Before we give an
alternative definition of this latter QoS, let us state the following result,
which will be useful in what follows.

\begin{proposition}
\label{p.rho-theta} We have
\begin{align}
\label{e.rho0-exchange}\mathbf{E}^{0}[\rho(0)]  &  =\frac{\rho}{\lambda
_{\mathrm{BS}}}\,,\\
\mathbf{E}^{0}[\theta(0)]  &  =\frac{\rho}{\lambda_{\mathrm{BS}}}%
\mathbf{E}[1/R\left(  \mathrm{SINR}\left(  0,\Phi\right)  \right)  ]\,.
\label{e.TypicalLoad}%
\end{align}

\end{proposition}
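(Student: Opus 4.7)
\medskip

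\noindent\textbf{Proof plan.} The first identity~(\ref{e.rho0-exchange}) is essentially a restatement of the classical fact that the mean volume of the typical Voronoi cell is the reciprocal of the intensity. Concretely, by~(\ref{e.TrafficDemand}), $\rho(0)=\rho|V(0)|$, so it suffices to establish $\mathbf{E}^{0}[|V(0)|]=1/\lambda_{\mathrm{BS}}$. This in turn follows from the inverse Palm (Ryll-Nardzewski/Neveu) formula
\begin{equation*}
\mathbf{E}[g(\Phi)] \;=\; \lambda_{\mathrm{BS}}\,\mathbf{E}^{0}\!\left[\int_{V(0)} g(\Phi - y)\,dy\right],
\end{equation*}
valid for every nonnegative translation-measurable functional $g$, applied with $g\equiv 1$ and using the fact that the Voronoi cells tile $\mathbb{R}^2$.

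For~(\ref{e.TypicalLoad}), my starting point is the representation~(\ref{e.theta-R}), which after taking Palm expectation yields
\begin{equation*}
\mathbf{E}^{0}[\theta(0)] \;=\; \rho\,\mathbf{E}^{0}\!\left[\int_{V(0)} \frac{1}{R(\mathrm{SINR}(y,\Phi))}\,dy\right].
\end{equation*}
The plan is to apply the same inverse formula above to the stationary functional $g(\Phi):=1/R(\mathrm{SINR}(0,\Phi))$. The crucial observation, which I would state explicitly, is that the SINR as defined in~(\ref{e.SINR}) is translation-covariant: $\mathrm{SINR}(y,\Phi)=\mathrm{SINR}(0,\Phi-y)$, because it only depends on the distances from $y$ to the points of $\Phi$ (and on the fixed noise level $N$ and emitted power $P$). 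Consequently $g(\Phi-y)=1/R(\mathrm{SINR}(y,\Phi))$, and the inverse formula immediately gives
\begin{equation*}
\lambda_{\mathrm{BS}}\,\mathbf{E}^{0}\!\left[\int_{V(0)} \frac{1}{R(\mathrm{SINR}(y,\Phi))}\,dy\right] \;=\; \mathbf{E}\!\left[\frac{1}{R(\mathrm{SINR}(0,\Phi))}\right],
\end{equation*}
which, combined with the display above, yields~(\ref{e.TypicalLoad}).

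The only genuine obstacle is invoking the inverse Palm formula with the correct identification of the typical cell under $\mathbf{P}^{0}$; everything else is a substitution. I would therefore simply cite the formula (e.g., from~\cite{BaccelliBlaszczyszyn2009T1} or~\cite{DaleyVereJones2003}) and keep the proof short, emphasizing the translation-covariance of $\mathrm{SINR}(\cdot,\Phi)$ as the reason one can pass between the Palm integral over $V(0)$ and the stationary expectation at the origin.
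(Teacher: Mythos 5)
Your proposal is correct and follows exactly the route of the paper's own (much terser) proof: both identities are obtained from the inverse formula of Palm calculus of~\cite[Theorem~4.2.1]{BaccelliBlaszczyszyn2009T1}, with~(\ref{e.TypicalLoad}) derived by combining that formula with the representation~(\ref{e.theta-R}). The only difference is that you spell out the choice of functional $g$ and the translation-covariance $\mathrm{SINR}(y,\Phi)=\mathrm{SINR}(0,\Phi-y)$, which the paper leaves implicit.
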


\begin{proof}
The first equation is quite intuitive: the average cell surface is equal to
the inverse of the average number of BS per unit of surface. Formally, both
equations follow from the inverse formula of Palm
calculus~\cite[Theorem~4.2.1]{BaccelliBlaszczyszyn2009T1}. In particular,
for~(\ref{e.TypicalLoad}) one uses representation~(\ref{e.theta-R}) in
conjunction with the inverse formula.
\end{proof}

\begin{remark}
\label{r.sinr} Note that the expectation in the right-hand-side
of~(\ref{e.TypicalLoad}) is taken with respect to the \emph{stationary}
distribution of the BS process $\Phi$. It corresponds to the spatial average
of the inverse of the peak bit-rate calculated throughout the network.
The (only) random variable in this expression is the SINR experienced by the
\emph{typical user}.
This distribution is usually known in operational networks (estimated from
user measurements). It can be also well approximated using Poisson network
model for which its distribution function admits an explicit expression;
cf~\cite{BlaszczyszynKarrayKeeler2013,coverage}.
\end{remark}

\subsection{Mean user throughput in the network}

\label{ss.throughput-network} Faithful to the usual definition of the mean
user throughput as the ratio of the mean volume of the data request to the
mean service duration (which we retained at the local, cell level) we aim to
define now the mean user throughput in the (whole) network as the ratio of
these two quantities taken for increasing network window $A$. However in order
to \textquotedblleft filter out\textquotedblright\ the impact of cells which
are not stable and avoid undesired degeneration of this characteristic (e.g.
for Poisson process; cf. Remark~\ref{r.nonstable-cells})
let us consider the union of all stable cells
\[
\mathcal{S}:=\bigcup_{X\in\Phi:\rho(X)<\rho_{c}(X)}V(X)\,.
\]
Note that the stationarity of $\Phi$ implies the same for the random set
$\mathcal{S}$. We denote by $\pi_{\mathcal{S}}=\mathbf{E}[\mathbf{1}%
(0\in\mathcal{S})]$ the volume fraction of $\mathcal{S}$ and call it the
\emph{stable fraction of the network}. It is equal to the average fraction of
the plane covered by the stable cells; cf.~\cite[Definition 3.4 and the
subsequent Remark]{BaccelliBlaszczyszyn2009T1}. Denote also
\begin{equation}
N^{0}:=\mathbf{E}^{0}[N(0)\mathbf{1}(N(0)<\infty)]\,.
\end{equation}
We are ready now to define the \emph{mean user throughput in the network}
$r^{0}$ as the ratio of the average number of bits per data request to the
average duration of the data transfer in the stable part of the network
\begin{equation}
r^{0}:=\lim_{|A|\rightarrow\infty}\frac{1/\mu}{\text{(temporal-)mean service
time in $A\cap\mathcal{S}$}}\,. \label{e.r0-def}%
\end{equation}
Here is the key result of the typical cell approach. Its proof is given at the
end of this section.

\begin{theorem}
\label{t.r-via-Little} For ergodic network $\Phi$ we have
\begin{equation}
\label{e.r-via-Little}r^{0}=\frac{\rho\,\pi_{\mathcal{S}}}{\lambda
_{\mathrm{BS}} N^{0}}\,.
\end{equation}

\end{theorem}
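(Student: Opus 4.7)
The plan is to rewrite the denominator in~(\ref{e.r0-def}) via Little's law applied to the union of all stable cells hitting the window $A$, and then evaluate the resulting quantities using the spatial ergodic theorems already invoked in the paper.

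Fix a window $A$ and consider the open system formed by all users served in cells of $\Rstable$ whose serving BS lies in $A$ (say; one could also use users whose arrival location is in $A\cap\Rstable$, which is equivalent since users do not move). In steady state this subsystem is fed by a Poisson arrival process with rate $\lambda\,|A\cap\Rstable|$ (arrivals per unit time) by the space-time Poisson assumption, and the time-average number of users present in it equals
\begin{equation*}
L(A)\;=\;\sum_{X\in\Phi\cap A}N(X)\,\ind(N(X)<\infty),
\end{equation*}
since stability of $X$ is equivalent to $N(X)<\infty$, and inside a stable cell the queueing-theoretic steady state exists with mean number of users $N(X)$ given by~(\ref{e.UsersNumber}). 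Users arriving in $A\cap\Rstable$ remain there until the end of their download, so Little's law applied to this open subsystem yields the time-averaged mean service time
\begin{equation*}
W(A)\;=\;\frac{L(A)}{\lambda\,|A\cap\Rstable|}\,.
\end{equation*}

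Now let $|A|\to\infty$ (say $A$ a centered ball). The stationarity of $\Rstable$ together with the spatial ergodic theorem for stationary random closed sets gives $|A\cap\Rstable|/|A|\to\pistable$ $\Pr$-almost surely. On the other hand, the ergodic theorem for the marked point process $\{(X,N(X)\ind(N(X)<\infty))\}_{X\in\Phi}$ (see~\cite[Theorem~13.4.III]{DaleyVereJones2003} or~\cite[Theorem~4.2.1]{BaccelliBlaszczyszyn2009T1}) yields
\begin{equation*}
\frac{L(A)}{|A|}\;\longrightarrow\;\lbs\,\E^{0}\!\left[N(0)\ind(N(0)<\infty)\right]\;=\;\lbs\,N^{0}.
\end{equation*}
Putting the two limits together,
\begin{equation*}
W(A)\;=\;\frac{L(A)/|A|}{\lambda\,|A\cap\Rstable|/|A|}\;\longrightarrow\;\frac{\lbs\,N^{0}}{\lambda\,\pistable}.
\end{equation*}
Inserting this into the definition~(\ref{e.r0-def}) and using $\rho=\lambda/\mu$ produces~(\ref{e.r-via-Little}).

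The main obstacle, and the only place where care is genuinely required, is justifying the Little's-law step for the spatially-defined open subsystem: one must argue that even though the BS process is infinite, the restriction to stable cells hitting the bounded window $A$ defines a bona fide stable open queueing system (independence across cells coming from the inter-cell channel independence and the space-time Poisson arrivals noted in Section~\ref{s.QoS}), and that boundary cells intersecting $\partial A$ contribute only an $o(|A|)$ correction to both $L(A)$ and $|A\cap\Rstable|$ so that the pointwise ergodic limits are unaffected. Once this bookkeeping is in place the rest is a direct application of the ergodic results already cited for~(\ref{e.rho-ergodicity}).
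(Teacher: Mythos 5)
Your argument is correct and essentially the same as the paper's: Little's law applied to the union of stable cells attached to a large window, followed by the spatial ergodic theorem for the marked point process $\{(X,\,N(X)\mathbf{1}(N(X)<\infty))\}_{X\in\Phi}$ in the denominator. The only (cosmetic) difference is that you evaluate the area term as $|A\cap\mathcal{S}|/|A|\to\pi_{\mathcal{S}}$ directly via the volume-fraction ergodic theorem for the stationary set $\mathcal{S}$, at the price of the $o(|A|)$ boundary bookkeeping you flag, whereas the paper keeps $|W|=\sum_{X\in A}|V(X)|\mathbf{1}(N(X)<\infty)$, applies the same ergodic theorem to it, and then converts $\mathbf{E}^{0}[|V(0)|\mathbf{1}(N(0)<\infty)]$ into $\pi_{\mathcal{S}}/\lambda_{\mathrm{BS}}$ by the inverse formula of Palm calculus --- two equivalent readings of the same identity.
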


\begin{remark}
Equation~(\ref{e.r-via-Little}) provides a macroscopic relation between the
traffic demand and the mean user throughput in the network, which we are
primarily looking for in this paper. It will be validated by comparison to
real data measurements. Quantities $N^{0}$ and $\pi_{\mathcal{S}}$ do not have
explicit analytic expressions analogous to~(\ref{e.TypicalLoad}). Nevertheless
they can be estimated from simulations of a given network model $\Phi$. Note
that these are static simulations of the network model. No simulation of the
traffic demand process is necessary, which greatly simplifies the task. For
small and moderate values of the traffic demand (observed in real networks)
one obtains $\pi_{\mathcal{S}}\simeq1$. Moreover, in
Section~\ref{ss.mean-cell} we will propose some more explicit approximation of
$N^{0}$.

\end{remark}

\begin{remark}
\label{r.network-throuhput} Assume that there are no unstable cells in the
network. This is the case e.g. for lattice (say hexagonal) network models with
traffic demand $\rho<\rho_{c}(X)=\rho_{c}$, where the value of the critical
traffic is the same for all cells. Then $\pi_{\mathcal{S}}=1$, $N^{0}%
=\mathbf{E}^{0}[N(0)]$ and the relation~(\ref{e.r-via-Little})
takes form
\begin{equation}
\label{e.r-via-Little-stable}r^{0}=\frac{\rho}{\lambda_{\mathrm{BS}}
\mathbf{E}^{0}[N(0)]}=\frac{\mathbf{E}^{0}[\rho(0)]}{ \mathbf{E}^{0}[N(0)]}\,.
\end{equation}
Thus, in general $r^{0}\not =\mathbf{E}^{0}[r(0)]=\mathbf{E}^{0}%
[\rho(0)/N(0)]$. We want to emphasize that this is not merely a theoretical
detail resulting from our (and common) definition of the mean
throughput~(\ref{e.r0-def}). The expression $\mathbf{E}^{0}[r(0)]=\mathbf{E}%
^{0}[\rho(0)/N(0)]$, which in principle can be considered as another global
QoS metric, is in practice difficult to estimate.
Indeed, when estimating $\mathbf{E}^{0}[r(0)]$ as the average of the ratio
``traffic demand to the number of users'' from real data measurements, one
needs to give a special treatment to observations which correspond to cells
during their idling hours (i.e., with no user, and such observations are not
rare in operational networks). Neither skipping nor literal acceptance of
these observations captures the right dependence of the mean user throughput
on the traffic demand.
\end{remark}

\begin{proof}[Proof of Theorem~\ref{t.r-via-Little}]By Little's law~\cite[eq. (3.1.14)]%
{baccelli2003elements} the temporal mean service time $T^{W}$ of users in any
region of the network $W$, say the union of stable cells with BS in some
region $A$, $W=\bigcup_{X\in A\cap\mathcal{S}}V(X)$, is related to the mean
number $N^{W}$ of the users served in this region $W$ in the steady state by
the equation $N^{W}=\lambda|W|T^{W}$. Consequently, the mean user throughput
in this region $W$ can be expressed as $1/(\mu T^{W})=\rho|W|/N^{W}$. Using
\begin{align}
\frac{|W|}{N^{W}} &  =\frac{|W|}{\sum_{X\in A\cap\mathcal{S}}N(X)}\\[1ex]
&  =\frac{\sum_{X\in A}|V(X)|\mathbf{1}(N(X)<\infty)}{|A|}\frac{|A|}%
{\sum_{X\in A\cap\mathcal{S}}N(X)}\,
\end{align}
and again the ergodic theorem for point yprocess $\Phi$, we obtain the that
limit in~(\ref{e.r0-def}) is $\Pr$-almost surely equal to $\rho\mathbf{E}%
^{0}[|V(0)|\mathbf{1}(N(0)<\infty)]/\mathbf{E}^{0}[N(0)\mathbf{1}%
(N(0)<\infty)]$. By the aforementioned inverse formula of Palm calculus we
conclude $\mathbf{E}^{0}[|V(0)|\mathbf{1}(N(0)<\infty)]=\mathbf{E}%
[\mathbf{1}(0\in\mathcal{S})]/\lambda_{\mathrm{BS}}$.
\end{proof}

\subsection{Mean cell}

\label{ss.mean-cell} It is tempting to look for a synthetic model which would
allow to relate main parameters and QoS metrics of a large irregular cellular
network in a simple, yet not simplistic way. The typical cell approach
described up to now offers such possibility. In this section we will go a
little bit further and propose an even simpler model. It consists in
considering a \emph{virtual} cell, to which we will assign the parameters and
QoS metrics inspired by the analysis of the typical cell. In contrast to the
typical cell, our virtual cell is not random and this is why we call it the
\emph{mean cell}. Specifically, we define it as a (virtual) cell having the
same traffic demand $\bar\rho$ and load $\bar\theta$ as the typical cell. Note
that these two characteristics admit explicit expressions; cf.
Proposition~\ref{p.rho-theta} and Remark~\ref{r.sinr}.
\begin{align}
\bar{\rho}  &  :=\mathbf{E}^{0}\left[  \rho\left(  0\right)  \right]
=\frac{\rho}{\lambda_{\mathrm{BS}}}\,,\label{e.TrafficDemand0}\\
\bar{\theta}  &  :=\mathbf{E}^{0}\left[  \theta\left(  0\right)  \right]  =
\frac{\rho}{\lambda_{\mathrm{BS}}}\mathbf{E}[1/R\left(  \mathrm{SINR}\left(
0,\Phi\right)  \right)  ]\,. \label{e.Load0}%
\end{align}
For the remaining characteristics, we assume that they are related to the
above two via the relations presented in Section~\ref{s.QoS}. Specifically,
following~(\ref{e.TrafficDemand}) we define the surface of the mean cell by
$\bar V=\bar\rho/\rho$ and in analogy to~(\ref{e.Load1})
we define the critical load of the mean cell as
\begin{equation}
\bar{\rho}_{\mathrm{c}}:=\frac{\bar{\rho}}{\bar{\theta}}\,.
\end{equation}
We say that the mean cell is stable if $\bar\rho<\bar\rho_{c}$. Inspired
by~(\ref{e.UserThroughput}) we define the user's throughput in the mean cell
by
\[
\bar{r}:=\max\left(  \bar{\rho}_{\mathrm{c}}-\bar{\rho},0\right)
\]
and, as in~(\ref{e.UsersNumber}),
the mean number of users in the mean cell is defined as
\[
\bar{N}:=\frac{\bar{\rho}}{\bar{r}}\,.
\]
We observe the following immediate relations.
\begin{corollary}
The mean cell is stable if and only if $\bar\theta<1$. In this case
\begin{align}
\label{e.barN-theta}\bar N  &  =\frac{\bar\theta}{1-\bar\theta}\,,\\
\bar r  &  =\bar\rho(1/\bar\theta-1)\,, \label{e.barr-theta}%
\end{align}
which are analogous to~(\ref{e.N-theta}) and~(\ref{e.r-theta}), respectively.
\end{corollary}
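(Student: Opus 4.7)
The plan is to unwind the definitions algebraically; there is essentially no content beyond symbol manipulation, so the proof is a one-line chain of equalities in each direction.

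First I would establish the stability equivalence. Since $\bar\rho>0$ (as $\rho>0$ and $\lambda_{\mathrm{BS}}>0$) and $\bar\theta>0$ (as $R(\mathrm{SINR})$ is finite with positive probability, so $\mathbf{E}[1/R(\mathrm{SINR}(0,\Phi))]>0$), the definition $\bar\rho_c=\bar\rho/\bar\theta$ gives
\[
\bar\rho<\bar\rho_c \iff \bar\rho<\bar\rho/\bar\theta \iff 1<1/\bar\theta \iff \bar\theta<1,
\]
which is the stability claim.

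Next, assuming $\bar\theta<1$, so that $\bar\rho<\bar\rho_c$, the definition $\bar r=\max(\bar\rho_c-\bar\rho,0)$ simplifies to
\[
\bar r=\bar\rho_c-\bar\rho=\frac{\bar\rho}{\bar\theta}-\bar\rho=\bar\rho\Bigl(\frac{1}{\bar\theta}-1\Bigr),
\]
which is \eqref{e.barr-theta}. Substituting this into $\bar N=\bar\rho/\bar r$ then yields
\[
\bar N=\frac{\bar\rho}{\bar\rho(1/\bar\theta-1)}=\frac{1}{1/\bar\theta-1}=\frac{\bar\theta}{1-\bar\theta},
\]
which is \eqref{e.barN-theta}.

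There is no real obstacle: the corollary is a purely formal consequence of the definitions of $\bar\rho_c$, $\bar r$, and $\bar N$ introduced immediately before the statement, mirroring exactly the derivation of \eqref{e.N-theta} and \eqref{e.r-theta} from \eqref{e.UserThroughput}, \eqref{e.UsersNumber} and \eqref{e.Load1} at the level of the local characteristics. The only implicit assumption worth flagging is the positivity of $\bar\rho$ and $\bar\theta$ used to divide by them.
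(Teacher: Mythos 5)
Your proof is correct and follows exactly the route the paper intends: the corollary is stated there as an ``immediate relation'' obtained by unwinding the definitions of $\bar{\rho}_{\mathrm{c}}$, $\bar{r}$ and $\bar{N}$, precisely as you do. Your explicit flagging of the positivity of $\bar\rho$ and $\bar\theta$ needed to divide by them is a small but legitimate addition that the paper leaves implicit.
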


\begin{remark}
The equation~(\ref{e.barr-theta}) provides an alternative macroscopic relation
between the traffic demand and the mean user throughput in the network. It is
purely analytic; no simulations are required provided one knows the
distribution of the SINR of the typical user in~(\ref{e.Load0}). It will be
validated by comparison to real data measurements. We consider it as an
approximation of~(\ref{e.r-via-Little}). It consists in assuming $\bar N\simeq
N^{0}/\pi_{\mathcal{S}}$. This latter hypothesis will be also separately
validated numerically.
\end{remark}
\begin{remark}
Note that the key characteristic of the mean cell is its load $\bar\theta$. In
analogy to the load factor of the (classical) M/G/1 processor sharing queue,
it characterizes the stability condition, mean number of users and the mean
user throughput.
\end{remark}


\subsection{Cell-load equations}

\label{s.PonderedInterference} We have to revoke now the full interference
assumption~\ref{fi} made in Section~\ref{ss.nm}. An amendment is necessary in
this matter for the model to be able to predict the real network data; cf
numerical examples in Section~\ref{s.NumericalResults}. Recall that the
consequence of this assumption is that in the expression~(\ref{e.SINR}) of the
SINR all the interfering BS are always transmitting at a given power $P$. In
real networks BS transmit only when they serve at least one
user.~\footnote{Analysis of more sophisticated power control schemes is beyond
the scope of this paper.} Taking this fact into account in an exact way
requires introducing in the denominator of~(\ref{e.SINR}) the indicators that
a given station $Z\in\Phi$ at a given time is not idling. This, in
consequence, would lead to the probabilistic dependence of the service process
at different cells and result in a non-tractable
model.~In particular, we are not aware of any result regarding the stability of such a family of dependent queues.
For this reason, we take into account whether $Z$ is idling or not in a
simpler way, multiplying its powers $P$ by the \emph{probability} $p(Z)$ that
it is not idle in the steady state. In other words we modify the expression of
the SINR as follows
\begin{equation}
\mathrm{SINR}\left(  y,\Phi\right)  :=\frac{P/l\left(  \left\vert
y-X\right\vert \right)  }{N+P\sum\limits_{Z\in\Phi\backslash\left\{
X\right\}  }p\left(  Z\right)  /l\left(  \left\vert y-Z\right\vert \right)
},\label{e.LoadInterference}%
\end{equation}
for $y\in V(X)$, $X\in\Phi$ where $p(Z)$ are cell non-idling probabilities
given by~(\ref{e.Proba}). We will see in Section~\ref{s.NumericalResults} that
this model, called \emph{(load-)weighted interference model}, fits better to
real field measurements than the full interference model. 
The above modification of the model
preserves the independence of the processor-sharing queues at different cells
given the realization $\Phi$ of the network (thus allowing for the
explicit analysis of Section~\ref{s.QoS}).  However
the cell loads $\theta(X)$ are no longer functions of
the traffic demand and the SINR experienced in the respective  cells, but are
related to each other by the following equations that replace~(\ref{e.theta-R}%
)
\begin{align}
\theta\left(  X\right) 
  =\rho\int_{V\left(  X\right)  }\frac{1}{R\left(  \frac{P/l\left(
\left\vert y-X\right\vert \right)  }{N+P\sum\limits_{Z\in\Phi\backslash
\left\{  X\right\}  }\min\left(  \theta\left(  Z\right)  ,1\right)  /l\left(
\left\vert y-Z\right\vert \right)  }\right)  }\,dy\,.\label{e.LoadSystem}%
\end{align}
We call this system of equations in the unknown cell loads $\left\{
\theta\left(  X\right)  \right\}  _{X\in\Phi}$\ the \emph{cell-load equations}.

\begin{remark}[Spatial stability]
The weighted interference model
introduces more ``spatial'' dependence between the processor sharing
queues of different cells, while preserving their ``temporal''
(conditionally, given $\Phi$)  independence.
A natural question regarding the existence and  uniqueness of the
solution of the fixed point problem~(\ref{e.LoadSystem}) arises.
Note that the mapping in the right-hand-side of~(\ref{e.LoadSystem})
is increasing  in all $\theta(Z)$, $Z\in\Phi$ provided function $R$ is increasing.
Using this property it is easy to see that successive iterations of this mapping
started off $\theta(Z)\equiv 0$ on one hand side and off $\theta(Z)$ as in~(\ref{e.theta-R}) (full interference model) on the other side,
converge to a minimal and maximal solution of~(\ref{e.LoadSystem}),
respectively.
An interesting theoretical question  regards the uniqueness
of the solution of~(\ref{e.LoadSystem}), in particular  for a random, say Poisson,  point process
$\Phi$.  Answering this question, which we call ``spatial stability'' of the model, is
unfortunately beyond the scope of this paper.%
\footnote{Existence and uniqueness of the solution of a very similar
  problem (with finite number of stations and a discrete traffic demand)
  is proved in~\cite{siomina2012analysis}.}
The simulation study of the typical cell
model,  presented in Section~\ref{s.NumericalResults} (where we use Matlab to find a solution of (\ref{e.LoadSystem}) for any given
finite pattern of base stations $\Phi$) is less stable for larger
values of the traffic demand $\rho$.
\end{remark}

In the mean cell approach (cf Section~\ref{ss.mean-cell}) we take into account
the weighted interference model by the following (single) equation in the
mean-cell load $\bar\theta$
\begin{align}
\bar{\theta}  &  =\frac{\rho}{\lambda_{\mathrm{BS}}}\mathbf{E}\left[
1/R\left(  \frac{P/l\left(  \left\vert X^{\ast}\right\vert \right)  }%
{N+P\sum_{Z\in\Phi\backslash\left\{  X^{\ast}\right\}  }\bar{\theta}/l\left(
\left\vert y-Z\right\vert \right)  }\right)  \right]  \,,
\label{e.LoadEquation}%
\end{align}
where $X^{\ast}$\ is the location of the BS whose cell covers the origin. We
solve the above equation with $\bar{\theta}$ as unknown. We will show in the
numerical section that the solution of this equation\ gives a good estimate of
the empirical average of the loads $\left\{  \theta\left(  X\right)  \right\}
_{X\in\Phi}$\ obtained by solving the system of cell-load
equations~(\ref{e.LoadSystem}) for the simulated model.

\begin{remark}
[Pilot channel] The cells which are not idle might still emit some power (e.g.
in the pilot channel). This can be taken into account by replacing
$p(Z)=\min(\theta(Z),1)$ in~(\ref{e.LoadSystem}) by $p(Z)(1-\epsilon
)+\epsilon$, where $\epsilon$ is the fraction of the power emitted all the
time. Similar modification concerns $\bar{\theta}$ in the right-hand-side
of~(\ref{e.LoadEquation}).
\end{remark}

\subsection{Shadowing}

\label{ss.shadow} Until now we were assuming that the propagation loss is only
induced by the distance between the transmitter and the receiver. In this
section we will briefly explain how the effect of shadowing can be taken into account.

Assume that the shadowing between a given station $X\in\Phi$ and all locations
$y\in\mathbb{R}^{2}$ is modeled by some random field $S_{X}\left(  y-X\right)
$. That is, we assume the propagation loss between $X$ and $y$
$
L_{X}\left(  y\right)  =\frac{l\left(  \left\vert y-X\right\vert \right)
}{S_{X}\left(  y-X\right)  }$.
We assume that, given $\Phi$, the random fields $S_{X}(\cdot)$ are independent
across $X\in\Phi$ and identically distributed. In general, we do not need to
assume any particular distribution for $S_{X}(\cdot)$ (neither independence
nor the same distribution of $S_{X}(y)$ across $y$).

The assumption that each user is served by the BS received with the smallest
path-loss results in the following modification of the geographic service zone
of $X$, which we keep calling ``cell''%
\begin{equation}
\label{e.V-shadow}V\left(  X\right)  =\{y\in\mathbb{R}^{2}:L_{X}\left(
y\right)  \leq\min_{Z\in\Phi}L_{Z}\left(  y\right)  \}\,.
\end{equation}
For mathematical consistence we shall assume that, almost surely, the origin
belongs to a unique cell (i.e., is not located on any cell boundary).

The SINR at location $y$ can be expressed by~(\ref{e.SINR})
or~(\ref{e.LoadInterference}) with $l(|y-Y|)$ replaced by $L_{Y}(y)$ for
$Y\in\Phi$, depending whether we consider the full interference or the
weighted interference model, for $y\in V(X)$, with $V(X)$ defined
by~(\ref{e.V-shadow}). The same modification regards the cell-load
equations~(\ref{e.LoadSystem}) and~(\ref{e.LoadEquation}).

All the previous results involving the typical cell remain valid for this
modification of the model. In particular, the results of
Proposition~\ref{p.rho-theta} can be extended to the model with shadowing
(where the cell associated to each base station is not necessarily the Voronoi
cell) provided the origin $0$ belongs to a unique cell almost surely.

Note that the mean cell surface $\mathbf{E}^{0}[|V(0)|]=1/\lambda
_{\mathrm{BS}}$, and hence the mean traffic demand per cell $\mathbf{E}%
^{0}[\rho(0)]=\rho/\lambda_{\mathrm{BS}}$, do not depend at all on the
shadowing. The values of other characteristics of the typical and the mean
cell will change depending on the distribution of the random shadowing field
$S_{X}(y)$ (both the marginal distributions and the correlation across $y$).
An interesting remark in this regard is as follows.

In the full interference model, the mean load of the typical cell and the load
of the mean cell (which are by the definition equal $\mathbf{E}^{0}%
[\theta(0)]=\bar\theta$) depend only on the stationary marginal distribution
of $\text{SINR}(0,\Phi)$, cf.~(\ref{e.Load0}). Hence, it does not depend on
the (spatial) correlation of $S_{X}(y)$ across $y$. Moreover, this
distribution is known in the case of the Poisson network and identically
distributed marginal shadowing $S_{X}(y)\sim S$. As explained
in~\cite{coverage}, in this case $\text{SINR}(0,\Phi)$ in the model with
shadowing has the same distribution as in the model without shadowing and the
density of stations equal to $\lambda_{\mathrm{BS}}\times\mathbf{E}%
[S^{2/\beta}]$. In particular, a specific distribution of $S$ and the
correlation of $S_{X}(y)$ across $y$ play no role. This equivalence of the two
models (with and without shadowing) is more general, as explained
in~\cite{coverage,netequivalence}, and applies also to the mean-cell-load
equation~(\ref{e.LoadEquation}).

\begin{remark}
The impact of the shadowing on the \emph{mean cell model, both in full and
weighted interference scenario} in the above Poisson model with $S_{X}(y)\sim
S$ can be summarized as follows. It modifies only the cell load $\bar\theta$
and not the traffic demand $\bar\rho$. Moreover, multiplying $\lambda
_{\mathrm{BS}}$ by $\mathbf{E}[S^{2/\beta}]$ and dividing $\rho$ by the same
moment one obtains an equivalent (in terms of all considered characteristics)
mean cell without shadowing.
\end{remark}

\section{Numerical results}
\label{s.NumericalResults}
\begin{figure}[t]
\begin{center}
\includegraphics[width=0.9\linewidth]{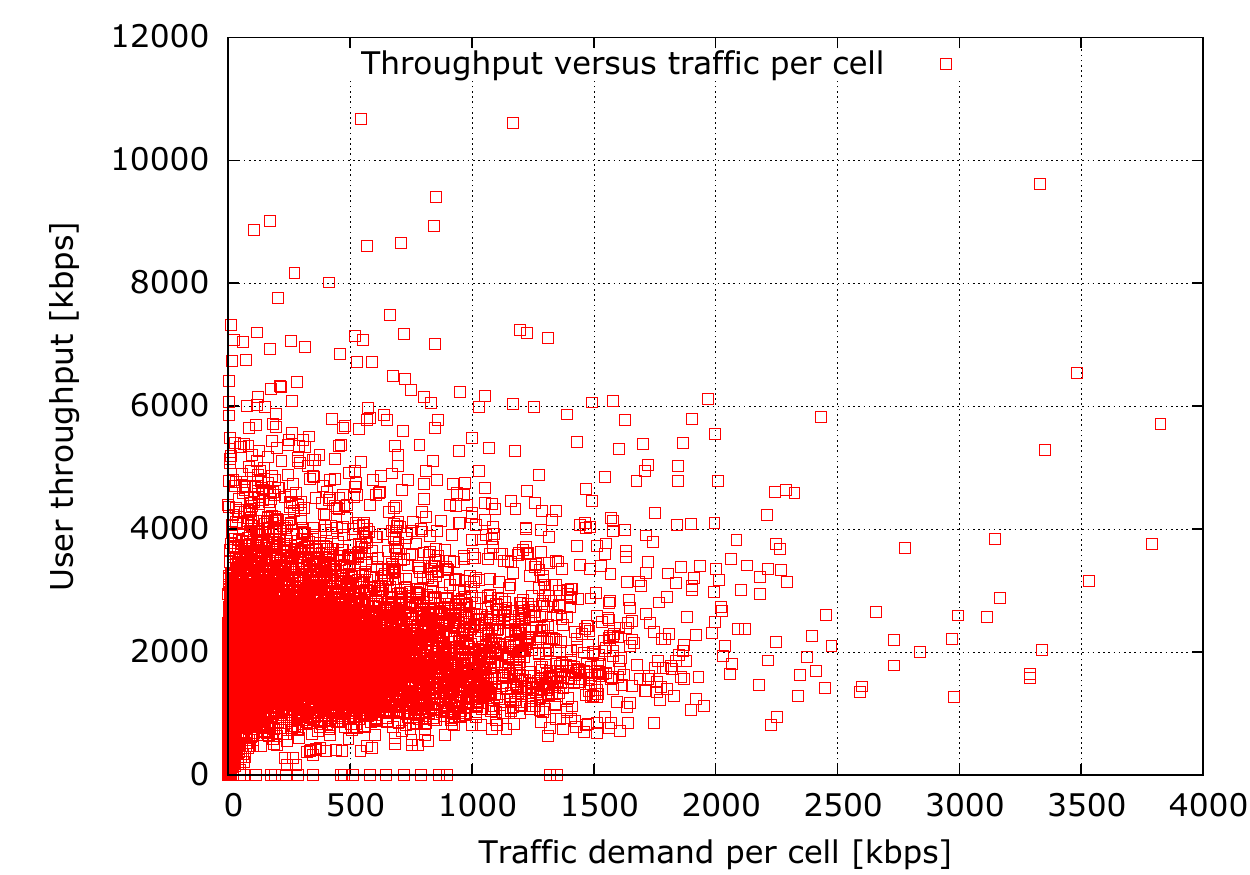}
\end{center}
\caption{Local user throughput versus local traffic demand for some
zone (selected to satisfy  a spatial homogeneity of the base stations)
of an operational cellular network deployed in a big city in
Europe. 9288 different points correspond to the measurements made by
different sectors of different base stations during 24 different hours of
some given day.}%
\vspace{-2ex}
\label{f.crude-data}%
\end{figure}

To illustrate the motivation of this work, we present first on
Figure~\ref{f.crude-data} non-averaged data obtained from the measurements
performed in an operational network in some zone of some big city in Europe.
\footnote{More precisely, a {\em dense urban network zone} consisting of 382 base stations was selected
in a big European city, whose locations loosely satisfy homogeneous spatial Poisson assumption.
Ripley's $L$-function, cf~\cite[page~50]{stoyan1995stochastic}, plotted on
Figure~\ref{f.Ripley}, was used to verify this latter assumption.
The density of base stations in this dense
urban zone  is about 4.62  base stations per km${}^2$. Later, we will consider also a {\em urban zone} of a different European
city, where the spatial homogeneous Poissonianity of the base station
locations can also be retained; cf. Figure~\ref{f.Ripley},
with roughly four times smaller density of base stations,
more precisely 1.15 stations per km${}^2$. In both cases the network operates HSDPA system with MMSE coding.}%

Different points in this figure correspond to the measurements of the traffic
demand and the estimation of the user throughput made by different cells
during different hours of the day. No apparent relation between these two quantities
can be observed in this way.

In order to uderstand and predict the performance of the network for which we
have presented the above data, we will now specify correspondingly our general
model and study it using the proposed approach. The obtained results will be
compared to the appropriately averaged real field measurements.

\subsection{Model specification}

Consider the following numerical setup. Assume Poisson process of BS with
intenisty $\lambda_{\mathrm{BS}}=4.62$km$^{-2}$ (which corresponds to an
average distance between two neighbouring BS of $0.5$km). We assume the
path-loss function $l(r)=(K r)^{\beta}$, with $K=7117$km$^{-1}$, and the path
loss exponent $\beta=3.8$. The propagation model comprises the log-normal
shadowing with the logarithmic standard deviation $10$dB;
cf~\cite{BlaszczyszynKarrayKeeler2013}, and the mean spatial correlation
distance $0.05$km.

The transmision power is $P=58$dBm, with the fraction $\epsilon=10\%$ used in
the pilot channel. The antenna pattern is described in~\cite[Table
A.2.1.1-2]{3GPP36814-900}. The noise power is \hbox{$N=-96$dBm}.

We assume the peak bit-rate equal to 30\% of the ergodic capacity of the AWGN
channel; cf. Footnote~\ref{fn.peak-rate}, with the frequency bandwidth
$W=5$MHz and the Rayleigh fading with mean power $E[|H|^{2}]=1$.

Estimations of the typical cell are performed by the simulation of 30
realizations of the Poisson model within a finite observation window, which is
taken to be the disc of radius $2.63$km. We first average over all BS in this
window and then over the model realizations.
The empirical standard deviation
form the obtained averages will be  presented via error-bars.

We shall study now our model using the typical and mean cell approach,
assuming first the full interference model and then the weighted one.

\subsection{Full interference}
\begin{figure}[t!]
\begin{center}
\includegraphics[width=0.8\linewidth]
{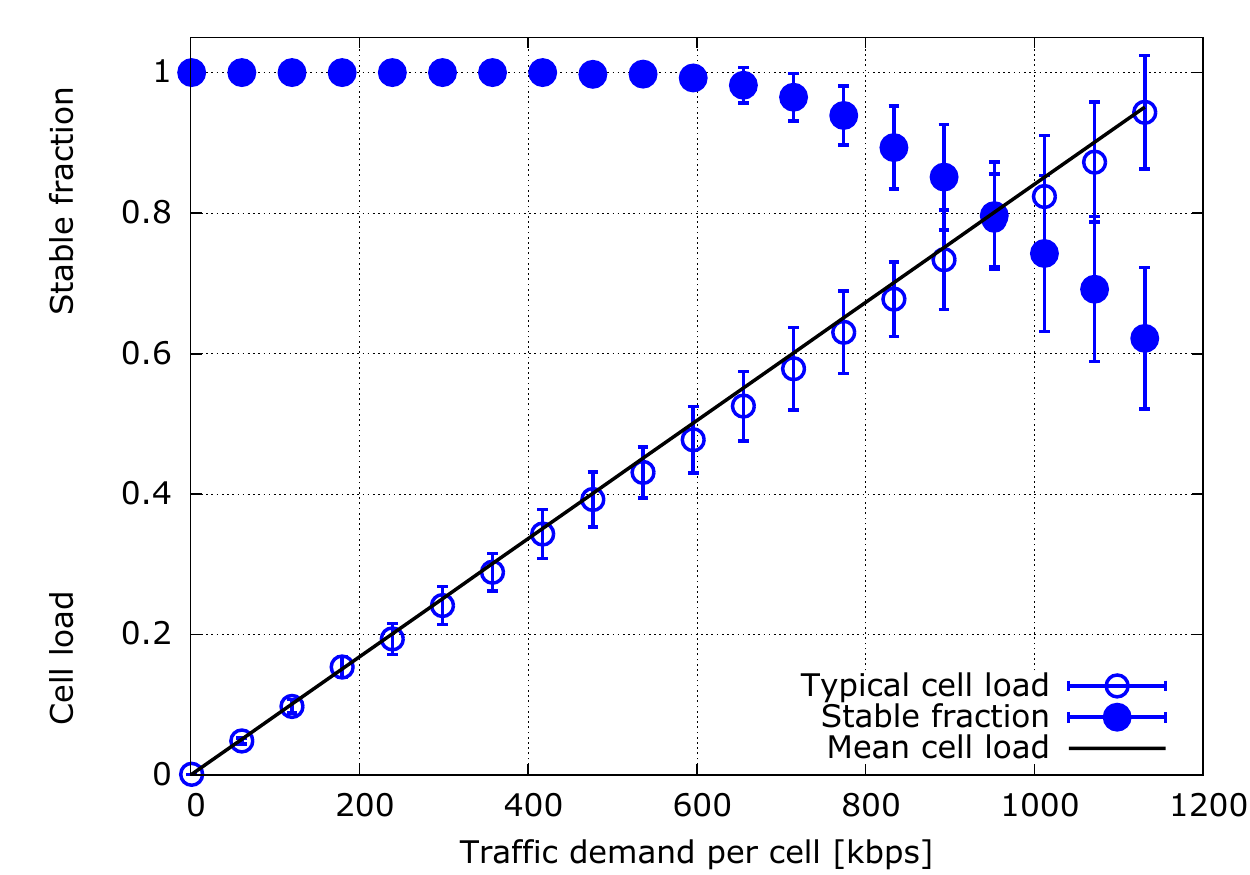}%
\caption{Cell load and the stable fraction of the network versus traffic demand per cell in the full  interference model.}%
\label{f.FullInterference_Load}%
\end{center}
%
\begin{center}
\includegraphics[width=0.8\linewidth]
{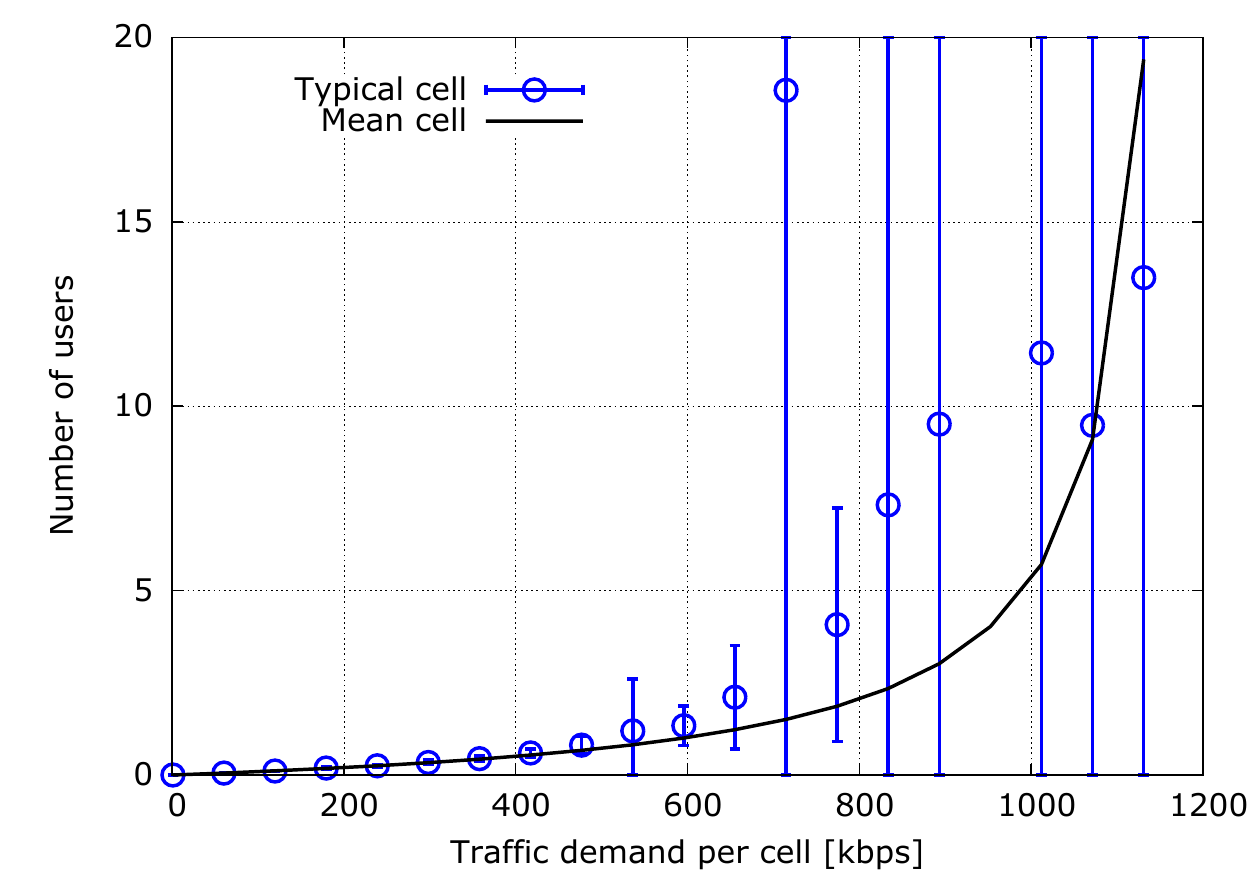}%
\caption{Number of users per cell
 versus traffic demand per cell in the full interference model.}
\label{f.FullInterference_UsersNumber}%
\end{center}
%
\begin{center}
\includegraphics[width=0.8\linewidth]
{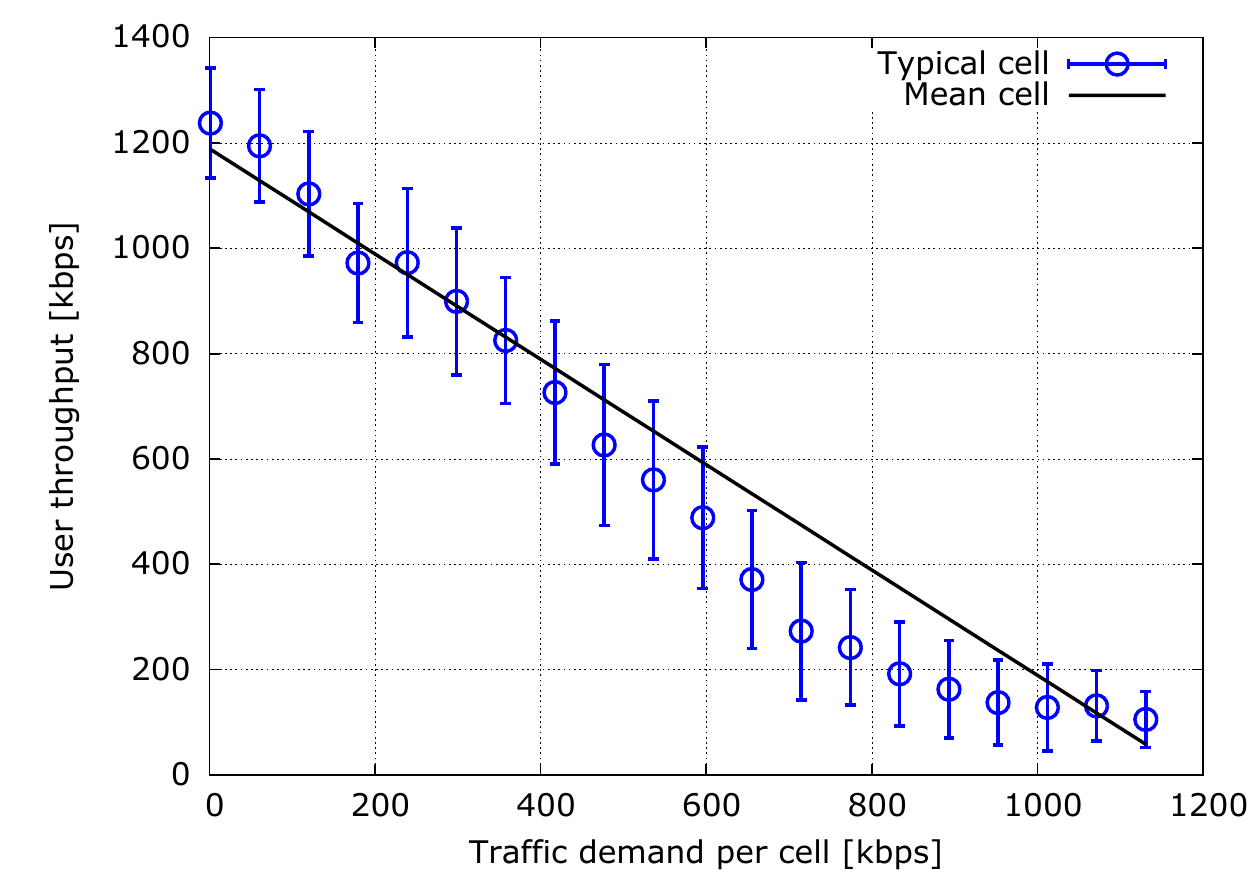}%
\caption{Mean user throughput in the network
 versus traffic demand per cell in the full interference model.}
\label{f.FullInterference_UserThroughput}%
\vspace{-4ex}
\end{center}
\end{figure}
%

\begin{figure}[t!]
\begin{center}
\includegraphics[width=0.8\linewidth]
{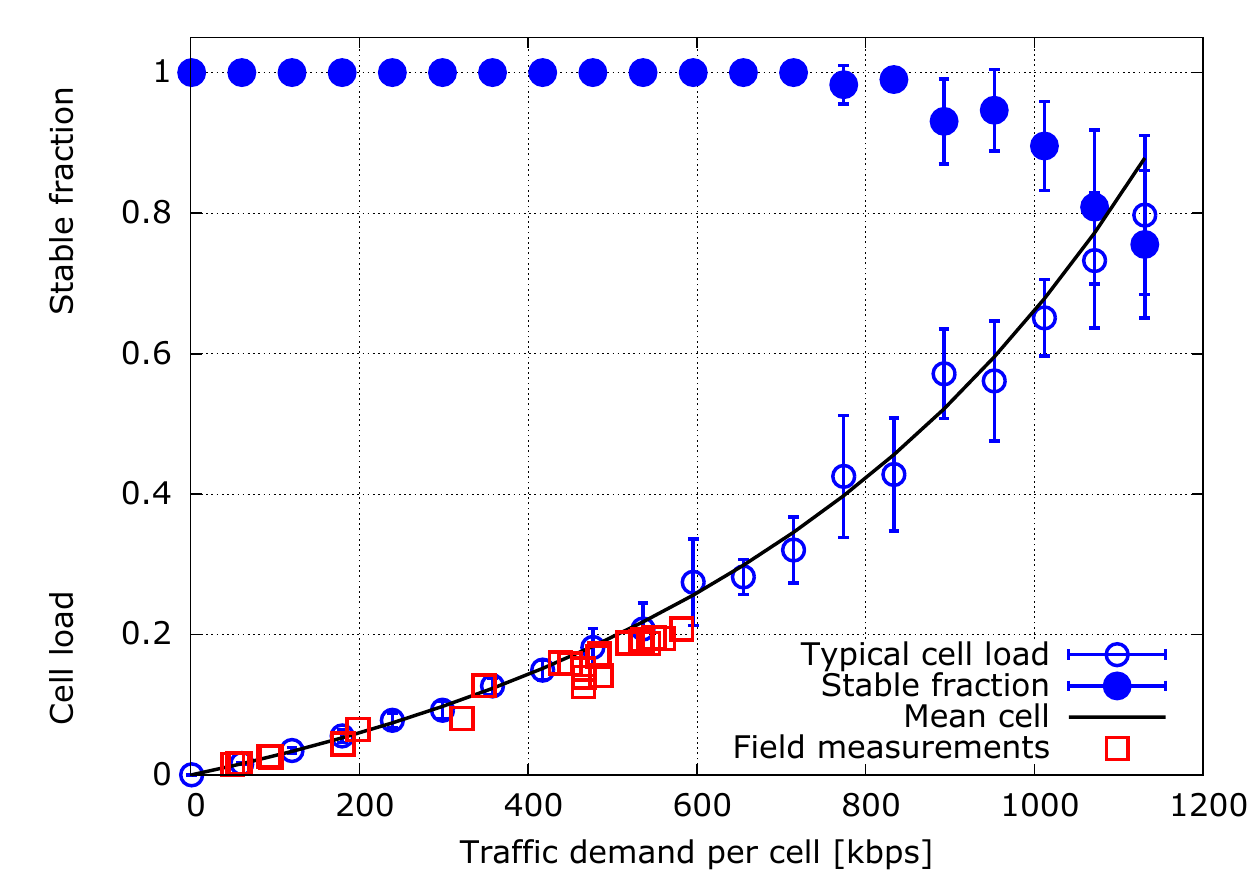}%
\caption{Load and the stable fraction of the network versus traffic demand in the weighted  interference model. Also, load estimated from real field measurements.}%
\label{f.PonderedInterference_Load}%
\end{center}
%
\begin{center}
\includegraphics[width=0.8\linewidth]
{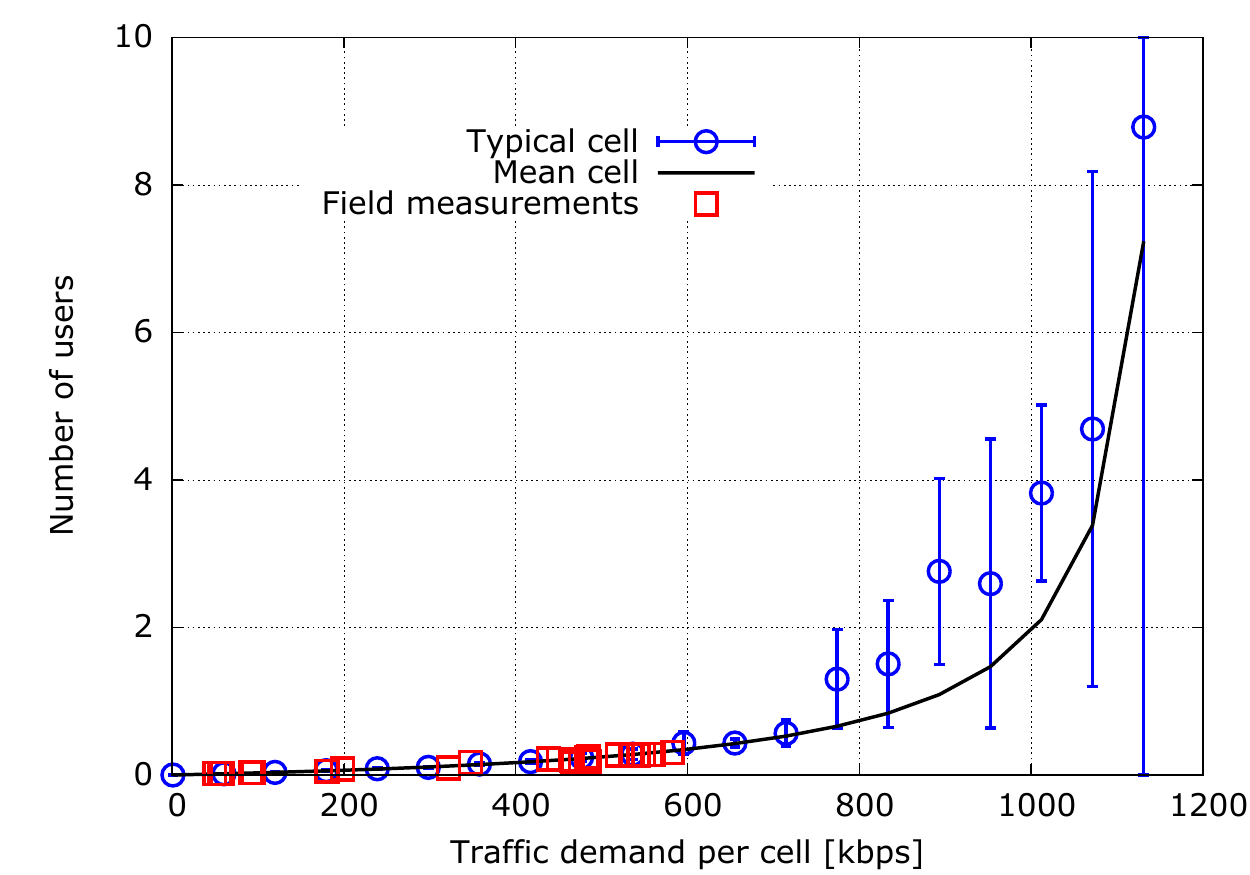}%
\caption{Number of users versus traffic demand per cell in the
  weighted interference model. Also, the same characteristic
estimated from the real field measurements.}%
\label{f.PonderedInterference_UsersNumber}%
\end{center}
%
%
\begin{center}
\includegraphics[width=0.8\linewidth]
{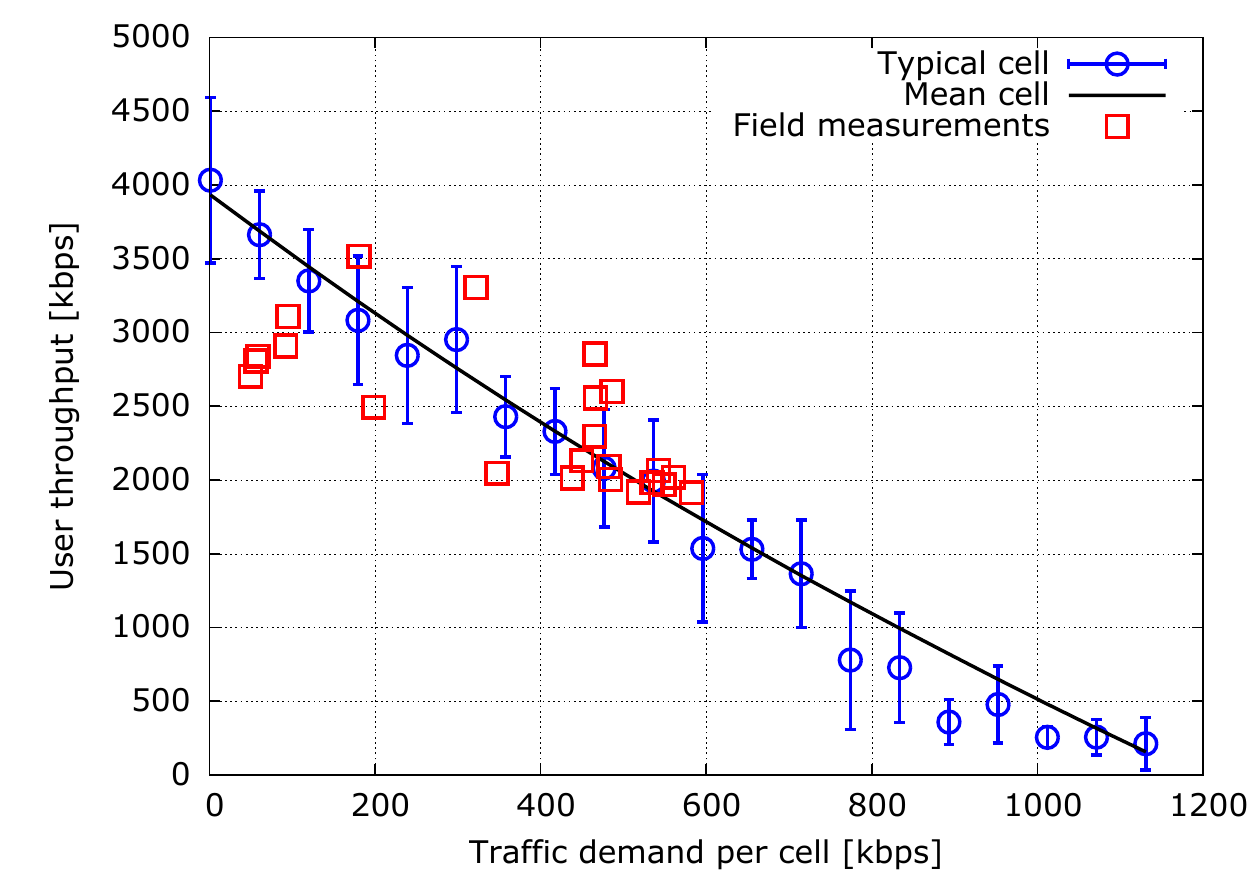}%
\caption{Mean user throughput in the network versus traffic demand per
  cell in the  weighted interference model. Also, the same characteristic
estimated from the real field measurements.}%
\label{f.PonderedInterference_UserThroughput}%
\vspace{-4ex}
\end{center}
\end{figure}

We consider first the full interference model (i.e. all BS emit the signal all
the time, regardless of whether or not they serve users).
Figure~\ref{f.FullInterference_Load} shows the mean cell load of the typical
cell $\mathbf{E}^{0}[\theta(0)]$ and the stable fraction of the network
$\pi_{\mathcal{S}}$ obtained from simulations, as well as the analytically
calculated load of the mean cell $\bar\theta$, versus mean traffic demand per
cell $\rho/\lambda_{\mathrm{BS}}$. We confirm that the typical cell and the
mean cell models have the same load. Note that for the traffic demand up to
500kbps per cell we do not observe unstable cells in our simulation window
($\pi_{\mathcal{S}}=1$).

Figure~\ref{f.FullInterference_UsersNumber} shows the mean number of users per
cell in the stable part of the network $N^{0}/\pi_{\mathcal{S}}$ (obtained
from simulations) and the analytically calculated number of users in the mean
cell $\bar N$ versus mean traffic demand per cell. We have two remarks. For
the traffic demand smaller than 500kbps per cell (for which all the simulated
cells are stable; $\pi_{\mathcal{S}}=1$, cf.
Figure~\ref{f.FullInterference_Load}), both models predict the same mean
number of users per cell. Beyond this value of the traffic demand per cell the
estimators of the number of users in the typical cell become not accurate due
very rapidly increasing fraction of the unstable region. (Error bars on all
figures represent the standard deviation in the averaging over 30 realizations
of the Poisson network).

Finally, Figure~\ref{f.FullInterference_UserThroughput} presents the
dependence of the mean user throughput in the network on the mean traffic
demand per cell obtained using the two approaches: $r^{0}$ and for the typical
cell and $\bar r$ for the mean cell. Again, both models predict the same
performance up to roughly 500kbps.

\subsection{Weighted interference}

We consider now the load-weighted interference model taking into account
idling cells. We see in Figures~\ref{f.PonderedInterference_Load},
\ref{f.PonderedInterference_UsersNumber} and
\ref{f.PonderedInterference_UserThroughput} that the consequence of this (more
realistic) assumption is that the cell loads are smaller, a larger fraction
$\pi_{\mathcal{S}}$ of the network remains stable, and the two approaches (by
the typical cell and by the mean cell) predict similar values of the QoS
metrics up to a larger value of the traffic demand per cell, roughly 700kbps.
Note that it is in this region that the real network operates for which we
present the measurements, and that its performance coincides with the
performance metrics calculated using the typical and mean cell approach.
More precisely, the field measurements on
Figures~\ref{f.PonderedInterference_Load},
\ref{f.PonderedInterference_UsersNumber} and
\ref{f.PonderedInterference_UserThroughput}
correspond to the same day and network zone considered on
Figure~\ref{f.crude-data}.

\begin{remark}
[Measurement methodology]\label{r.measure-method}
Measurement points
on Figure~\ref{f.PonderedInterference_Load}
show the fraction of time, within a given hour,
when the considered base stations were idle, averaged over the
base stations, as function of the average
traffic demand during this hour.
Similarly, measurement points on
Figure~\ref{f.PonderedInterference_UsersNumber}
show the spatial average of the mean number of users reported by
the considered base stations within a given hour, as function of the
average traffic demand during this hour.
Finally, measurements on Figure~\ref{f.PonderedInterference_UserThroughput}
give the ratio of the total number of bits transmitted by all the
base stations during a given hour, to the total number of users they  served
during this hour in function  of the
average traffic demand during this hour.
\end{remark}

Remark that
Figure~\ref{f.PonderedInterference_UserThroughput}  makes evident
a macroscopic relation between
the traffic demand and the mean user throughput in the
network zone already considered on  Figure~\ref{f.crude-data}.
This relation, we are  primarily looking for in this paper,
is not visible without the spatial averaging of the
network measurements described in Remark~\ref{r.measure-method}.
In order to ensure the reader that a
relatively good matching between the measurements and the analytic
prediction is not a coincidence we present on
Figure~\ref{f.SecondCity} similar results for a {\em urban zone} of a different European
city, where the spatial homogeneous Poissonianity of the base station
locations can also be retained; cf. Figure~\ref{f.Ripley}.
The only engineering difference of this network zone with respect
to the previously considered  dense urban zone is roughly four times smaller density of base stations,
more precisely 1.15 stations per km${}^2$.

\begin{remark}[Day and night hours]
Let us make a final remark regarding the empirical relation between the mean user throughput
on the mean traffic demand revealed on Figures~\ref{f.PonderedInterference_UserThroughput} and \ref{f.SecondCity}.
Recall that different points on these plots correspond to different hours of some given day. In fact, the points
laying below the mean curve correspond to day hours  while the
points laying above the mean curve correspond to night hours.
This ``circulation'' of the measured values
around the theoretical mean curve, indicated on Figure~\ref{f.SecondCity} and  visible on both presented plots of the throughput,
seems to be a more general rule,
which escapes from the analysis presented in this paper and remains an open question.
A possible explanation can lay in a  different space-time structure of
the traffic during the day and night, with the former one being much more clustered (fewer users, requesting larger volumes,
generating less interference and overhead traffic).
\end{remark}

\begin{figure}[t]
\begin{center}
\includegraphics[width=0.8\linewidth]{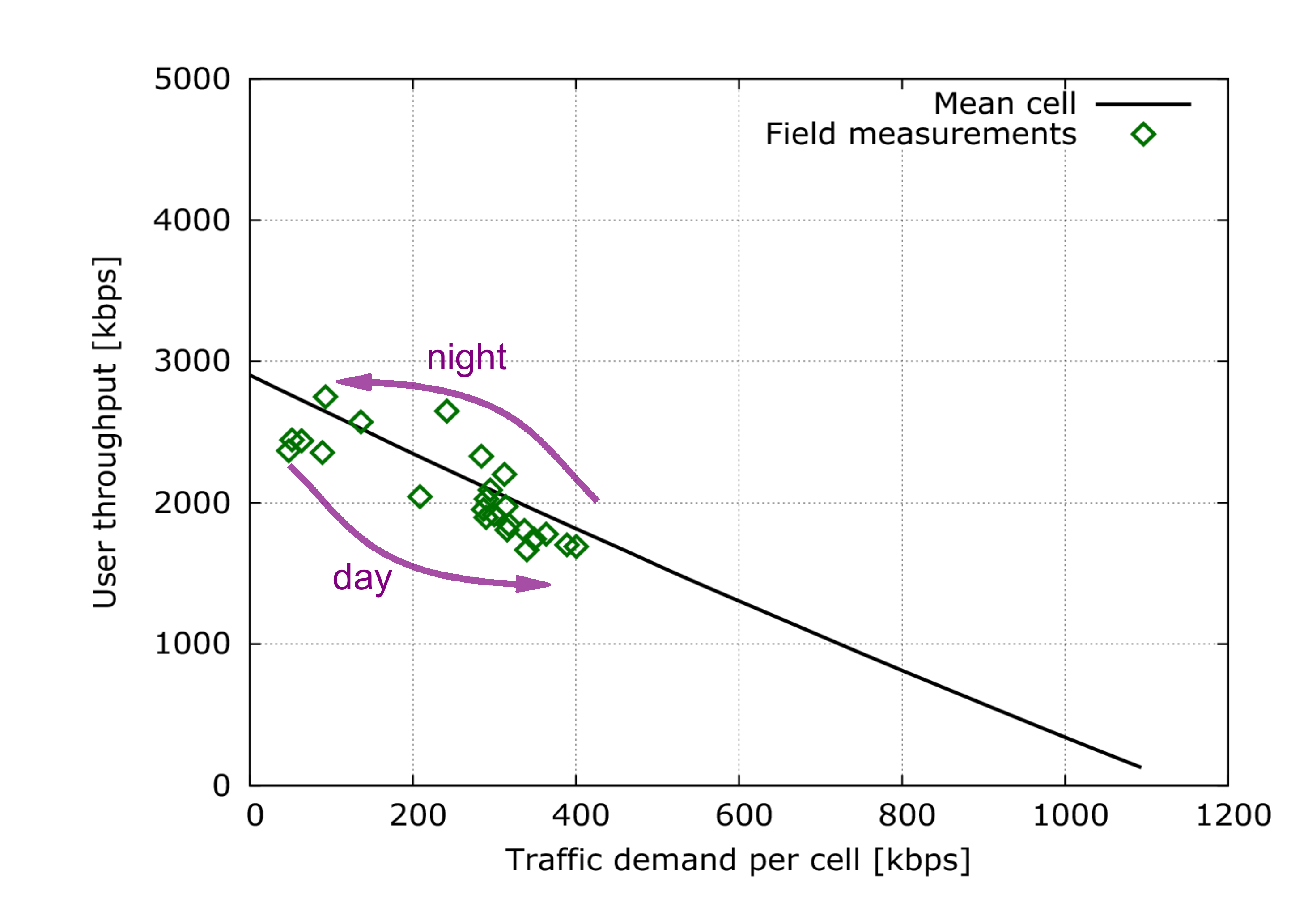}
\end{center}
\caption{Mean user throughput in the network versus traffic demand per surface
for an urban zone of a big city in Europe. (The density of base stations is 4
times smaller than in the dense urban zone considered on
Figure~{\ref{f.PonderedInterference_UserThroughput}}).}%
\vspace{-2ex}
\label{f.SecondCity}
\end{figure}

\begin{figure}[t]
\begin{center}
\includegraphics[width=1\linewidth]{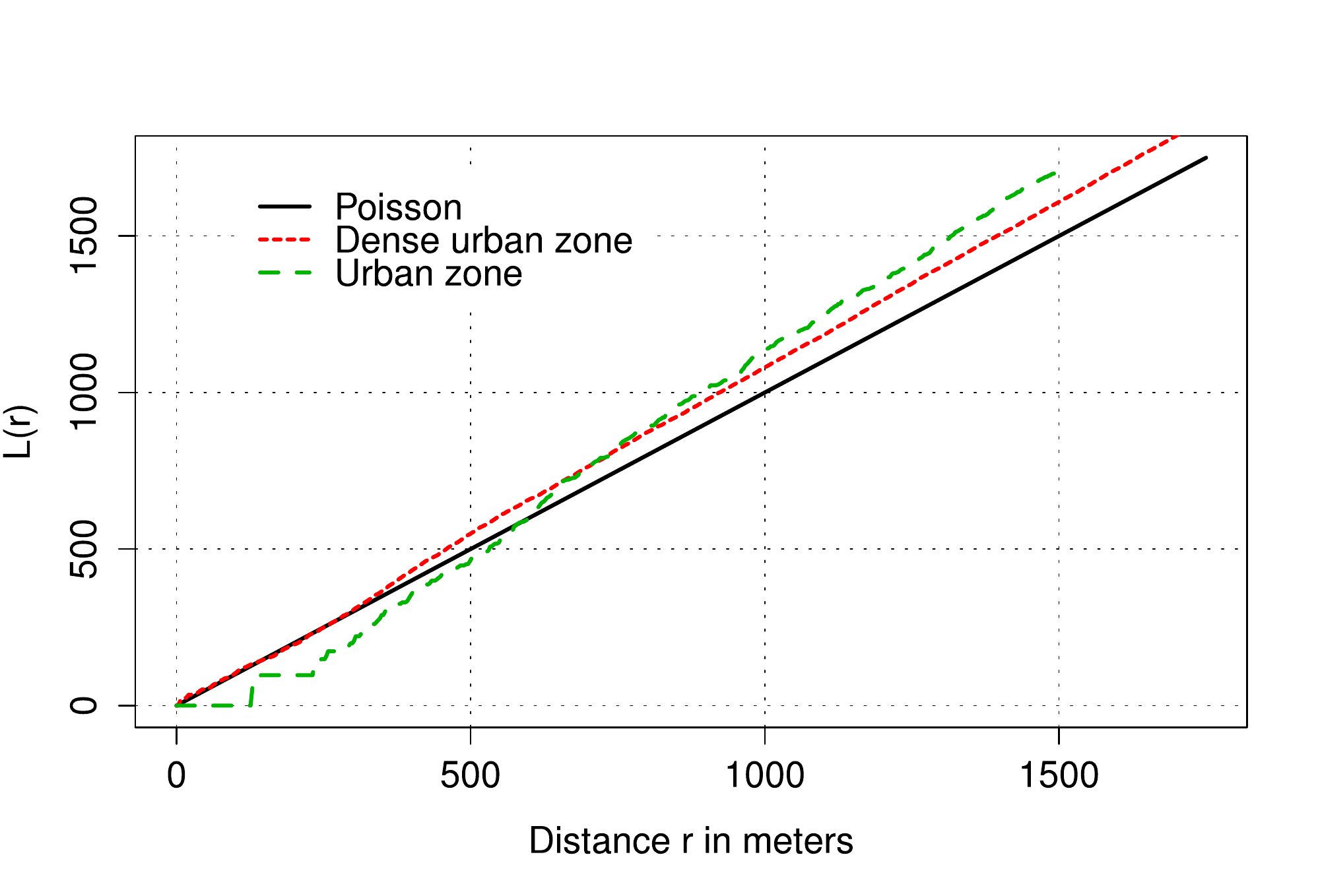}
\end{center}
\vspace{-2ex}
\caption{Ripley's $L$-function calculated for the considered
dense urban and urban network zones.
($L$ function is the square root of the sample-based estimator of the expected number of neigbours of the typical point
within a given distance, normalized by mean number of points in the
disk of the same radius. Slinvyak's theorem allows to calculate the
theoretical value of this function for a homogeneous
Poisson process, which is  $L(r)=r$.) In fact, in large  cities spatial, homogeneous ``Poissonianity'' of base-station   locations is often  satisfied
``per zone'' (city center, residential
zone, suburbs, etc.). Moreover, log-normal shadowing further justifies Poisson assumption, cf.~\protect\cite{brown2000cellular,BlaszczyszynKarrayKeeler2013}.}%
\label{f.Ripley}%
\vspace{-4ex}
\end{figure}

\section{Conclusion}

In order to evaluate user's QoS metrics, in particular the mean user
throughput, in large irregular multi-cellular networks, two approaches based
on stochastic geometry in conjunction with queueing and information theory are
developed. The typical cell approach consists in considering true spatial
averages of local network characteristics and thus capturing the global
network performance. A simpler, approximate but fully analytic approach,
called the mean cell approach, is inspired by the analysis of the typical
cell. The key quantity explicitly calculated in this approach is the cell
load. In analogy to the load factor of the (classical) M/G/1 processor sharing
queue, it characterizes the stability condition, mean number of users and the
mean user throughput. Considering some real field measurement, we validate the
proposed approach by showing that it allows to predict the performance of a
real network.

The present work raised open theoretical questions regarding the stability of
spatially and, more difficult, space-time dependent processor sharing queues
modeling the performance of individual network cells (cf
Section~\ref{s.PonderedInterference}). More work is also required to figure
out the problem of different performance of the network during day and night
hours (cf Figure~\ref{f.SecondCity}).

\bibliographystyle{IEEEtran}

\end{document}